\documentclass[11pt]{article}
\usepackage[utf8]{inputenc}

\usepackage{amsmath, amsthm}
\usepackage{thmtools}
\usepackage{thm-restate}

\usepackage[margin=3.5cm]{geometry}

\usepackage{amssymb}

\newtheorem{theorem}{Theorem}
\newtheorem{corollary}[theorem]{Corollary}

\newtheorem{lemma}[theorem]{Lemma}

\newcommand{\items}{\mathcal{M}}

\title{A lone divider allocation algorithm with subjective divisibility}

\author{Uriel Feige\thanks{Weizmann Institute, Israel. {\tt uriel.feige@weizmann.ac.il}}}

\begin{document}

\maketitle

\begin{abstract}
    In the subjective divisibility allocation model, all goods are divisible, every agent $i$ has a non-negative additive valuation function $v_i$, where for some goods $v_i$ is additive over fractions of the good, and for some goods, any fraction smaller than~1 has value~0. Previous work established the existence of $\frac{5}{9}$-MMS allocations in this setting. We show that $\frac{3}{5}$-MMS allocations exist. Our proof is based on an adaptation of the lone divider framework.
\end{abstract}

\section{Introduction}

We consider a setting of allocating items with subjective divisibility, as defined and studied in~\cite{BeiLL25}. There are $n$ agents and a set $\items$ of $m$ divisible goods, also referred to as items. An allocation is represented as an $n$ by $m$ matrix $A$ that assigns to every agent $i$ a fraction $A_{ij}$ of item $j$, with $A_{ij} \ge 0$ and $\sum_i A_{ij} \le 1$ for every $j$.  Row $A_i$ is the allocation to agent $i$. Each agent $i$ has a set $D_i \subseteq \items$, where every item $j \in D_i$ is divisible for $i$, and every item $j \not\in D_i$ is indivisible for $i$. Each agent has a non-negative valuation function $v_i$ over the items. The value of an allocation $A_i$ for agent $i$ is:

$$v_i(A_i) = \sum_{\{j \mid A_{ij} = 1\}} v_i(j) \; + \sum_{\{j \in D_i \mid A_{ij} < 1\}} A_{ij} \cdot v_i(j)$$

The maximin share (MMS) value for agent $i$ is:

$$MMS(\items, v_i, D_i, n) = \max_A \min_j v_i(A_j)$$

In~\cite{BeiLL25} it was shown that there always are allocations giving each agent at least $\frac{1}{2}$-MMS, and for $n \le 3$, there always are allocations giving each agent at least $\frac{2}{3}$-MMS. In~\cite{BeiDLW26} the bounds were improved, showing that there always are allocations giving each agent at least $\frac{5}{9}$-MMS, and for $n \le 4$, there always are allocations giving each agent at least $\frac{2}{3}$-MMS. For every $n \ge 2$, there are allocation instances in which every allocation gives at least one of the agents no more than $\frac{2}{3}$-MMS. In this note, we improve the existential results for $n \ge 5$.

\begin{theorem}
\label{thm:main}
In the subjective divisibility setting defined above, there always are allocations giving each agent at least $\frac{3}{5}$-MMS. 
\end{theorem}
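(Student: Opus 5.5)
We will prove Theorem~\ref{thm:main} with a lone divider procedure in the style of Kuhn, adapted to subjective divisibility. By scaling, assume $MMS(\items, v_i, D_i, n) = 1$ for every agent $i$. Set $\alpha = \frac{3}{5}$. A bundle is \emph{acceptable} to agent $i$ if it has value at least $\alpha$ for $i$.

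\textbf{Reductions.} First we apply standard valuation-preserving reductions. If some single indivisible item (for agent $i$) has $v_i(j) \ge \alpha$, give it to an agent who values it at $\ge\alpha$ and remove both. Removing one agent and one item does not decrease anyone else's MMS. For agents to whom the item is divisible, the removed fraction is at most one whole item, and their MMS with $n-1$ agents is still at least $1$. We also justify an ordering assumption: that all agents rank the items in the same order. Since partial-value semantics differ across agents, this step needs care, and we may simply avoid it. After the reductions, every indivisible item has value $<\alpha$ for its agent.

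\textbf{Lone divider step.} The main loop is as follows. With $k$ agents remaining and the remaining items $R$, we pick a remaining agent $d$ whose value for $R$ is large enough. The invariant we aim for is that every remaining agent has $v_i(R) \ge k$ in an appropriate sense, namely that $i$'s original MMS partition loses at most $n-k$ bundles' worth. Agent $d$ partitions $R$ (using fractional splits of its divisible items) into $k$ bundles, each of value at least $\alpha$ to $d$. This is possible by a greedy bag-filling argument: keep adding items, splitting divisible items exactly at the threshold, and closing a bag once it reaches $\alpha$. Each closed bag overshoots by less than $\alpha$, because only indivisible items cause overshoot and those are worth $<\alpha$.

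We then build the bipartite graph between agents and bundles, with an edge wherever a bundle is acceptable. Every remaining agent needs to find at least one acceptable bundle; we prove this by averaging. By Hall's theorem we take a maximal envy-free matching (a Hall-violator argument), allocate the matched bundles, and return the unmatched bundles to the pool. Matched agents leave. Unmatched agents valued each allocated bundle below $\alpha$.

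\textbf{The crux.} The main obstacle is maintaining the invariant for unmatched agents. A bundle they value below $\alpha$ was removed, but because of subjective divisibility, that bundle may contain fractions of items that are indivisible for them. Those fractions are worth $0$ to them in the bundle, yet they destroy value in their MMS partition. To control this, the divider must split at most one divisible item per bundle; fractional pieces are then created only at bag boundaries. We charge each removed bundle against the agent's MMS bundles. An unmatched agent $i$ loses the visible value $<\alpha$ plus at most one partially taken item, which $i$ may value up to $\alpha$ (at most $1$ of MMS mass). We then need a weighted counting argument showing $i$'s remaining value still supports an $\alpha$-partition into the remaining $k$ bundles. The accounting is $1 - \alpha$ slack per MMS bundle against loss. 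With $\alpha = 3/5$, the slack $2/5$ must absorb the split-item damage. We would try pairing each split item with the preceding bag's overshoot, and treating items worth more than $1/5$ as "large" and handling them separately via an initial matching of large indivisible items. We expect the constant $3/5$ to emerge exactly from balancing these two loss sources, namely $\alpha + (1-\alpha)/2 \le 1 - \alpha/3$-type inequalities. Settling this case analysis is the hardest part.
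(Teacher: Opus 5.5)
Your proposal has a genuine gap, and it is the crux you leave open. Because your divider cuts free items at bag boundaries, an item $s$ that is indivisible for another agent $i$ becomes entirely worthless to $i$ as soon as one piece of it leaves. So a single departing bundle can cost $i$ nearly $\alpha$ in visible value plus nearly $\alpha$ for $s$ (and a bag in a filling chain can carry pieces of two cut items, one at each end). A loss approaching $2\alpha=\frac65>1$ per departing agent breaks the invariant $v_i(R)\ge k$. The divider's overshoot, which you hope to pair this loss with, is measured in the divider's valuation, so it compensates $i$ for nothing.

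The paper's key idea is to forbid such cuts. The divider's partition must be \emph{pure}: it may split items that are already partial (their leftovers are already worthless to anyone viewing them as indivisible), but never a free item. Then each departing agent costs $i$ less than $\rho$.

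Purity makes the divider's task harder, and the rest of the paper compensates for that. None of these pieces appears in your plan:
\begin{itemize}
\item one-agent steps that hand out the smallest acceptable fraction of a single item;
\item two-agent steps, in which two agents split three items with one shared item, together with the bookkeeping $n_i$ that retires the MMS part holding an indivisible shared item, so that the net loss is $3\rho-1\le 2-2\rho$, i.e.\ exactly when $\rho\le\frac35$ (this, together with the surplus $t_i(3-5\rho)\ge 0$ in the pure-partition lemma, is where $\frac35$ comes from);
\item lemmas showing that if no agent has a pure partition, every active agent has at least two divisible special items (value in $(1-\rho,\rho)$) and three items of value above $\frac12$, so either a two-agent step applies or a 2-matching gives everyone two special items worth at least $2(1-\rho)\ge\rho$.
\end{itemize}

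Two of your intermediate claims are also false as stated.
\begin{itemize}
\item In the reduction, giving away a whole item that is indivisible for one agent can destroy almost all the value of an agent who views it as divisible. Their MMS therefore need not stay at least $1$. The paper instead allocates the smallest fraction acceptable to anyone, which bounds that loss by $\rho$.
\item Greedy bag-filling does not produce $k$ bags from total value about $k$. With indivisible items of value in $(1-\alpha,\alpha)$, a bag can close near $2\alpha>1$. These special items must be paired; the paper uses a maximum matching on pairs of total value at most~$1$, counted against the MMS partition.
\end{itemize}
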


Our allocation algorithm proving Theorem~\ref{thm:main} is based on the lone divider approach. This approach was previously shown to give $(\frac{2}{3} + \Theta(\frac{1}{n}))$-MMS allocations for instances with indivisible items and additive valuations~\cite{procaccia2014fair}. More related to our work, it was also adapted to a setting with sellable items~\cite{FeigeGafni2026}, where it was shown to give $\frac{2}{3}$-MMS allocations. Here we adapt it to the setting with subjective divisiblity. The difficulty is that the lone divider might choose to divide items that other agents view as indivisible. This may result in creation of parts in the division that other agents can neither accept (the value is too low) nor let go of (as then only worthless fractions of some indivisible items remain). 

To overcome the above difficulty, we require the division to be {\em pure}: no previously undivided item is allowed to be subdivided by the lone divider. This requirement was inspired by a related requirement imposed in~\cite{FeigeGafni2026}. The problem then becomes that insisting on pure divisions reduces the approximation ratio. To counter this, we add preliminary steps before we apply the lone divider algorithm. One type, that we refer to as {\em one-agent} steps, was already used in previous algorithms in the subjective divisibility setting, and in the sellable item setting. With such a preliminary step, the lone pure divider approach guarantees $\frac{1}{2}$-MMS.  

Improving the approximation ratio from $\frac{1}{2}$ to {$\frac{3}{5}$} is our main contribution. We use a combination of two new ideas. One is another type of preliminary step, referred to as {\em two-agent} step. The other is that we allow the lone divider to fail in producing a pure division, but we show that if this happens (and given that we performed the preliminary steps), an allocation can be found using a different procedure, referred to as a {2-matching}.

\section{Our $\frac{3}{5}$-MMS allocation algorithm}

Consider an arbitrary allocation instance with subjective divisibility. For every agent $i$, we assume that the valuation function $v_i$ is scaled so that the MMS value is~1. For simplicity in the analysis, we may further assume that the value of each part is exactly~1, by pretending (in the analysis, not in the run of the algorithm) 
that some items have value smaller than their actual value, if needed. 
Hence, initially, the total value of all items for each agent is $n$. 
(For each agent $i$ we pretend that there is an ``auxiliary" valuation function that initially equals $v_i$. Then, during the analysis, values of items according to the auxiliary valuation may decrease to below their value under $v_i$. All value statements in our proofs refer to this auxiliary valuation. A bundle proved acceptable under the auxiliary valuation is certainly acceptable under $v_i$.)

For a given value of $\rho$, we say that a bundle $B$ (containing items and fractions of items) is {\em acceptable} for agent $i$ if $v_i(b) \ge \rho$.
We present a {polynomial time} 
allocation algorithm that given valuation functions scaled so that the MMS value is~1, outputs a $\rho$-MMS allocation. The algorithm will proceed in steps, where in each step, some agents receive acceptable bundles and leave, and the other agents remain active. As to items, some will remain {\em free} (available for active agents in future steps), some will become {\em allocated} (if already allocated in full), and some will become {\em partial items} (if some fraction of them was allocated, the fraction that remains unallocated is a partial item). An item that is either free or partial will be referred to as {\em available}.

Our algorithm applies the following types of steps, moving to the next type of step only after the current type of step is no longer applicable.  
We will show that for our choice of $\rho = \frac{3}{5}$, every agent receives an acceptable bundle.

\begin{enumerate}

\item {\em One-agent} steps. A single active agent receives (part of) a single item, and leaves.

As long as there is an active agent for which one of the available items is acceptable, allocate the smallest fraction of an item that is acceptable for some agent, to that agent. 

\item {\em Two-agent} steps. Two active agents each receives (parts of) a set of three items, and leave.

As long as there are two active agents for which three of the available items can be partitioned into two bundles, one acceptable for one agent and the other to the other agent, do such an allocation. Among such allocations, choose the one that has the smallest fraction $f$ in the following sense. Three available items, say $\{e_1,e_2,e\}$ are allocated. Items $e_1$ and $e_2$ are allocated in full, each to a different agent. Item $e$ is referred to as the {\em shared item}. An $f$ fraction of $e$ is divided among the two agents, and a $1 - f$ fraction of $e$ remains available for future steps. Having fixed the set $\{e_1,e_2,e\}$, the choice of two agents needs to be such that $f$ is as small as possible,

\item {\em Lone pure divider} steps. Several active agents receive acceptable bundles and leave.

Let $n'$ denote the number of active agents remaining. Call a partition of the available items (free and partial) {\em pure} if no free item is split into more than one part. If some agent $i$ has a pure partition of the remaining items into $n'$ parts acceptable for $i$ (as constructed by the polynomial time procedures in the proofs of Lemmas~\ref{lem:3divisible} and~\ref{lem:4large}), then we execute the lone divider pure step, using this partition. Construct a bipartite graph with the $n'$ active agents on one side, the $n'$ parts on the other, and edges between active agents and the  parts that they find acceptable. As in other applications of the lone divider approach, there are two possibilities. If there is a perfect matching (every agent is matched to a different part), then the allocation algorithm ends. If no matching is perfect, then there is an ``envy-free matching": a nonempty subset of active agents each gets an acceptable part and leaves,  whereas for each of the remaining agents, each allocated part is not acceptable.

\item {\em 2-matching} step. Each one of the active agents receives two of the available items and leaves. The allocation algorithm ends. 

We say that an available item $e$ (for partial items, $e$ signifies only the part of the item that is available) is {\em special} for agent $i$ if $1 - \rho <  v_i(e) < \rho$ (the inequalities make sense, because $\rho > \frac{1}{2}$). If for each active agent there are two available items that are special only for that agent, each active agent receives two special items, and the algorithm ends. The bundle received by such an agent is acceptable if $2(1 - \rho) \ge \rho$, which holds for $\rho \le \frac{2}{3}$.
  
\end{enumerate}


\subsection{Analysis of the approximation ratio}

We bound from below the value of $\rho$ that guarantees successful termination of our allocation algorithm. In every step, the agents that leave receive acceptable bundles. The key to the analysis is to bound from below the value that the remaining available items have for an active agent.

Fix an active agent $i$ and its given MMS partition (in which each part has value~1). We analyse the effect of each type of step on the value that remains for $i$. Our initial goal will be to show that the loss incurred by each leaving agent is at most $2 - 2\rho$. (As one example of where the expression $2(1 - \rho)$ value comes up in our analysis, see the 2-matching step described above.)

{\em One-agent steps, indivisible item}. Suppose that an agent left with item $e$ that agent $i$ views as indivisible. If $e$ was a partial item, no value was lost (as parts of $e$ have no value for $i$). If $e$ was a free item, then only the MMS part containing $e$ lost value, whereas the number of agents was reduced by one. Hence, the MMS value for $i$ did not decrease, and for agent $i$ we set the new value of $n$ to be one less that its previous value. (For simplicity of the analysis, pretend that the values of other items and fractional parts of items in the bundle of $e$ are dropped to~0.)

From now on, we shall interpret the value of $n$ to be the updated value, after taking into account all reductions in the value of $n$ that occurred due to ``one-agent steps, indivisible item" steps, and assume that no such steps occur. (Under this interpretation, different active agents may see different values of $n$.)

{\em One-agent steps, divisible item}. Suppose that an agent left with item $e$ that agent $i$ views as divisible. Agent $i$ lost a value of at most $\rho$, as otherwise $i$ herself was willing to leave with a smaller fraction of $e$. The inequality $\rho \le 2 - 2\rho$ holds for $\rho \le \frac{2}{3}$.

{\em Two-agent steps, divisible shared item.} Suppose that two agents left with three items, and the shared item $e$ is divisible for agent $i$. Then each of the leaving agents leaves with a bundle that $i$ values as at most $\rho$, as otherwise $i$ herself was willing to leave instead of the leaving agent, and with a smaller fraction of $e$. Hence, $i$ loses a value of at most $2\rho \le 2(2 - 2\rho)$, where the inequality holds for $\rho \le \frac{2}{3}$.

{\em Two-agent steps, indivisible shared item.} Suppose that two agents left with three items, and the shared item $e$ is indivisible for agent $i$. If $e$ was a partial item, then agent $i$ lost no value due to $e$ being taken, and the two-agent step is equivalent from her perspective to two separate one-agent step. Thus we may assume without loss of generality that $e$ was a free item.
Then, agent $i$ loses a value of at most $3\rho$ (one $\rho$ for each item). To simplify future analysis (and without loss of generality), we pretend that the value of the lost shared item $e$ was highest possible, $\rho$, and that the value of other items or fractions of items in the MMS part that contained $e$ is reduced so that their total value of $1 - \rho$.

To partly compensate for the relatively large loss of ``two-agent steps, indivisible shared item", we rearrange the MMS partition of agent $i$. We keep track of a parameter $n_i$, whose initial value is $n$. Each time that the above type of event happens, $n_i$ is decreased by one, meaning that one MMS part is eliminated. The part that is eliminated is the one containing the shared item $e$. Its surplus value $1 - \rho$ is not discarded, but rather, compensates for some of the value lost by the other two items that were allocated. Overall, among the two agents leaving, one is accounted for by reducing $n_i$ by one, and only for one leaving agent, we need to account for the value lost. This value is at most $2\rho - (1- \rho) = 3\rho - 1$. We note two relevant bounds:

\begin{enumerate}
    \item If $\rho \le \frac{2}{3}$, the value lost is at most~1.
    \item If $\rho \le \frac{3}{5}$, the value lost is at most $\le 2 - 2 \rho$.
\end{enumerate}

{\em Lone pure divider} step.  Each part taken by an agent that leaves has value smaller than $\rho$ for $i$. As no free item is partitioned in this step, agent $i$ did not suffer additional loss due to partitioning of an item that she views as indivisible. Hence, again, the loss per agent is at most $\rho$, which is at most $2 - 2\rho$ for $\rho \le \frac{2}{3}$.

\subsection{Establishing that eventually, no agent is active}

It remains to show that if the lone pure divider step is not applicable (and likewise, the earlier one-agent and two-agent steps are not applicable), then the 2-matching allocation step is applicable.

\begin{lemma}
\label{lem:3divisible}
Suppose that $\rho \le \frac{3}{5}$, neither a one-agent nor a two-agent step can be executed, and let $n'$ denote the number of active agents. If for active agent $i$ 
at most one available item (free or partial) is both divisible and special (of value more than $1 - \rho$), then $i$ has a pure partition as required for the lone pure divider step.
\end{lemma}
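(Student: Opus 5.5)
We build the pure partition greedily by bag filling, and the whole argument rests on one budget: agent $i$ must still value the available items at a total of at least $n'$ times a suitable amount, and each finished bag must use up at most that amount.

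First, the budget. From the analysis of the preceding subsection (with the updated $n$ and the $n_i$ bookkeeping), agent $i$ has lost at most $2-2\rho$ per departed agent. So the total available value for $i$ is at least $n' + (\text{departed})(1-(2-2\rho)) \ge n'$ whenever $\rho\ge \frac12$. More carefully, I would use the bound of at least $n'$ together with the structure the earlier steps leave behind.

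Second, the structural facts. Since no one-agent step applies, every available item (free or partial) has value less than $\rho$ for $i$. Since no two-agent step applies, no three available items can be split into two bundles acceptable for $i$ and another active agent. Applied to $i$ alone, I would deduce the following. If $e$ is a divisible special item and $e_1,e_2$ are two other items with $v_i(e_1)+v_i(e_2)\ge \rho$, then an appropriate split of $e$ would hand $i$ an acceptable bundle. The point is that the two-agent condition concerns pairs of agents, so the cleanest use is to bound how many large items can sit together.

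The partition itself is built by bag filling. Free indivisible items are placed whole. Partial items, and free divisible items, may be split only as purity allows, and purity forbids splitting free items at all. So every free item is placed whole, and only partial items (for which $i$ is necessarily divisible, or which are worthless to her) may be fractionally cut. We open a bag and add whole free items in order of decreasing value until the bag reaches $\rho$. Because each item is below $\rho$, a bag closes at a value below $2\rho$. Partial items, being freely divisible, are used to top bags up exactly to $\rho$.

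The main obstacle is the waste. A bag may overshoot to nearly $2\rho$, since two items each just below $\rho$ give $2\rho-\epsilon$, and that exceeds the per-bag budget of $1$. Here the hypothesis does the work. Items of value at most $1-\rho$ give overshoot at most $1-\rho$, so such a bag costs at most $1$. Overshoot above $1$ requires at least one special item in the bag. If that special item is indivisible for $i$, I would pair special items: two specials sum to more than $2(1-\rho)\ge\rho$, and I would account for their cost against the MMS parts, since an MMS part containing two specials has little else in it. Special divisible items number at most one by hypothesis, so they cost at most one extra $\rho$ overall. This would be absorbed by the slack from the departed agents (the loss $3\rho-1\le 2-2\rho$ is tight only at $\rho=\frac35$), or, when there are no departed agents, by letting that item be the only one that is split — a partial item is allowed to be split. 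I expect the exact charging of pairs of indivisible specials against the MMS partition to be the delicate point, and I would handle it by comparing, via counting within the MMS partition, how many MMS parts contain big items.
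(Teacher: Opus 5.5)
Your plan correctly locates the difficulty: a bag holding two indivisible special items can be worth nearly $2\rho>1$. But the step that resolves it is exactly the one you defer, and the accounting you do state cannot work with your filling rule. With no departed agents the total is exactly $n'$, so there is no slack. Yet filling in decreasing order immediately pairs the two largest specials into a bag worth more than $1$; take both close to $\rho$ with all items indivisible, which satisfies the hypotheses. A per-bag budget of $1$ is then already violated. Rescuing decreasing-order greedy would need a quite different argument, which you do not give.

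The missing idea is to choose \emph{which} specials share a bag. Let $V$ be the free items that are indivisible and special for $i$. If $|V|\ge 2n'$, two per bag suffice. Otherwise, join two items of $V$ by an edge when their values sum to at most $1$, and take a maximum matching $M$. Matched pairs are bags of value in $(\rho,1]$. Each unmatched item starts its own bag, and topping up with items of value at most $1-\rho$ never pushes a bag above $1$. This needs only $|V|-|M|\le n'$.

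If $|V|-|M|>n'$, then $|V|-|M|-n'$ bags must receive two unmatched items, each wasting at most $2\rho-1$. Their number is bounded via the $n_i=n-t_i$ surviving MMS parts; items of $V$ cannot lie in eliminated parts, whose remaining value is $1-\rho$. Each surviving part holds at most two items of $V$, and the parts holding two give disjoint edges. Hence $n_i\ge|V|-|M|$, and at most $n_i-n'$ bags overflow. The slack $(n-n')(1-\rho)-t_i\rho$ minus the waste $(n-t_i-n')(2\rho-1)$ is at least $t_i(3-5\rho)$, using $t_i\le\frac{n-n'}{2}$. This is where $\rho\le\frac35$ enters.

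There are two further problems. First, your slack $(\text{departed})\cdot(2\rho-1)$ rests on a loss of $2-2\rho$ per departed agent, which holds only relative to $n_i$. An indivisible-shared two-agent step costs $i$ up to $3\rho$ for two agents, which exceeds $2(2-2\rho)$ once $\rho>\frac47$. Measured against $n$, the slack is only $(n-n')(1-\rho)-t_i\rho$.

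Second, your fallback for the divisible special item fails. With no departed agents there are no partial items, so that item is free, and splitting it breaks purity. It needs no slack anyway: place it after everything else. Then at most one bag exceeds $1$, with value below $2\rho$. If some bag were still below $\rho$, all items would have been used and the total would be below $(n'-2)+3\rho\le n'$, a contradiction for $\rho\le\frac23$.

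Finally, your attempted use of the two-agent hypothesis ``applied to $i$ alone'' is not meaningful, since that condition concerns pairs of agents. From the earlier steps, the argument needs only that every available item is worth less than $\rho$, plus the value bookkeeping.
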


\begin{proof}
Under the conditions of the lemma, we need to show that a pure partition into $n'$ parts exists, with each part of $v_i$ value at least $\rho$. Observe that for $\rho \le \frac{2}{3}$, the total value of all available items to $i$ is at least $n'$.

Consider first the set $V$ of available free items that $i$ views as both indivisible and special (of value larger than $1 - \rho$, where note that $1 - \rho \ge \frac{1}{3}$ for $\rho \le \frac{2}{3}$). If $|V| \ge 2n'$, then $i$ can place in each part of the partition two such items, achieving a pure partition as desired (for $\rho \le \frac{2}{3}$). Hence, we may assume that $|V| < 2n'$. Construct a graph $G(V,E)$ whose vertex set is $V$ as above, and two vertices are joined by an edge if the sum of values of the respective items is at most~1. Let $M$ be a maximum matching in this graph, and let $|M|$ denote the number of edges in this matching.

Suppose first that $|V| - |M| \le n'$. Then create $|M|$ parts, each containing the two vertices of one edge from $M$. Each such part has value at most~1 and above $\rho$. Hence, no additional items need to be placed in these parts. It remains to construct $n' - |M|$ additional parts. Each unmatched item from $V$ is placed in a different such parts. This can be done because we assume that $|V| - |M| \le n'$.

As long as some part has value lower than $\rho$, we do the following. If there are free indivisible items left, then the value of each such item is at most $1 - \rho$. 
As long as there are free indivisible items left, add them to parts of value smaller than $\rho$ (as long as there are such parts). No part can exceed a value of~1. Now, as long as there are non-special divisible items left (free or partial, of value at most $1 - \rho$, add them to parts of value smaller than $\rho$ (as long as there are such parts). No part can exceed a value of~1. Finally, there may be one special divisible item of value more than $1 - \rho$, but at most $\rho$. Place it in a part of value lower than $\rho$, if such parts exist.

The above partition is pure. We show that each part has value at least $\rho$. At most one part has value above~1 (the part containing the special divisible item), but at most $2\rho$. Suppose for the sake of contradiction that some part has value below $\rho$. Then the sum of values of the two parts is below $3\rho$. As $\rho \le \frac{2}{3}$, this value is strictly below~2. This means that the total value of all parts is strictly smaller than $n'$. contradicting the fact that it is at least $n'$. 

It remains to consider the case that $|V| - |M| > n'$. Here we make use of the notion of $n_i$ introduced in the ``two-agent steps, indivisible shared item" steps. Let $t_i$ be the number of such steps, and so $n_i = n - t_i$. Consider the MMS partition of agent $i$ into $n_i$ parts (those parts not containing the allocated indivisible shared items). Observe also that no item of the other original MMS parts is in $V$, because the value left in each such part is $1 - \rho$. Recall that each part has value~1. As each item in $V$ has value above $\frac{1}{3}$, no part can contain three items from $V$. As $M$ is a maximum matching, at most $|M|$ of the $n_i$ parts have two items from $V$, implying that $n_i \ge |V| - |M|$. 

The total value available to $i$ is computed as follows. There were $t_i$ ``two-agent steps, indivisible shared item" steps. In each such step, two agents left and the value lost was at most $3\rho$. In addition, $n - n' - 2t_i$ agents left, each reducing the value by $\rho$. So the total value left is $n' + (n - n')(1 - \rho) - t_i \rho$.

Following the same proof as in the case that $|V| - |M| \le n'$, the difference is that $|V|-|M| - n' \le n_i - n'$ parts now need to contain two unmatched items from $v$, rather than only one unmatched item from $v$. Each such part might have value as high as $2\rho$ instead of just~1, so there is wasted value of at most $(n_i-n')(2\rho - 1) = (n - t_i - n')(2\rho - 1)$. 

In the total value left there was a surplus of at least $(n - n')(1 - \rho) - t_i \rho$ (we do not claim at this point that the surplus is non-negative), so the net surplus (after subtracting the wasted value) is: 

$$(n - n')(1 - \rho) - t_i \rho - (n - t_i -n')(2\rho - 1) = (n - n')(2 - 3\rho) - t_i(1 - \rho) \ge t_i(3 - 5\rho)$$

In the last inequality we used the fact that $t_i \le \frac{n - n'}{2}$.

The surplus is nonnegative when $\rho \le \frac{3}{5}$. Thereafter, the remaining parts of the proof work without change.
\end{proof}

\begin{lemma}
\label{lem:4large}
Suppose that $\rho \le \frac{2}{3}$, and let $n'$ denote the number of active agents. If for active agent $i$ the total value of available items is at least $n'$, no remaining item has value above $\rho$, and there is at most two available items (free or partial) of value above $\frac{1}{2}$, then $i$ has a pure partition as required for the lone divider pure step.
\end{lemma}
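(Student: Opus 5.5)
We propose a greedy bag-filling argument that uses only whole (unsplit) free items, so the resulting partition is automatically pure. Partial items may be placed as they are: their available fraction is treated as a single unit, and we never split it further. Every available item then has $v_i$ value at most $\rho$, and at most two of them have value above $\frac{1}{2}$. The goal is to build $n'$ parts, each of value at least $\rho$, using the fact that the total available value is at least $n'$.

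\textbf{Step 1: seed the large items.} Put each of the at most two items of value above $\frac{1}{2}$ into its own part. Such a part has value in $(\frac{1}{2},\rho]$ and may still need more items.

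\textbf{Step 2: greedy filling.} Order the remaining items arbitrarily; each has value at most $\frac{1}{2}$. Open parts one at a time, first completing the seeded parts. Add items to the current part until its value first reaches at least $\rho$, then close it and open the next. Since each added item has value at most $\frac{1}{2}$ and the part was below $\rho$ just before the last addition, every closed part has value below $\rho+\frac{1}{2}$. If items run out before $n'$ parts are closed, this bound is what we will contradict.

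\textbf{Step 3: counting.} Suppose only $k<n'$ parts close. The key claim is that each closed part has value at most $1$, or else we can account for its overshoot. We would first refine Step 2 to a two-phase procedure. Pair items of value in $(1-\rho,\frac{1}{2}]$ so that each pair has value in $[2(1-\rho),1] \supseteq [\rho,1]$, using $\rho \le \frac{2}{3}$. Then fill with items of value at most $1-\rho$: adding such an item to a part of value below $\rho$ keeps the part at most $1$. A seeded large item of value $x \in (\frac{1}{2},\rho]$ is completed either with small items of value at most $1-\rho$, giving value at most $1$, or with one medium item, giving value at most $\rho+\frac{1}{2}$.

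With this refinement, all closed parts have value at most $1$, except possibly the at most two seeded parts completed by a medium item, each of value at most $\rho+\frac12 \le \frac76$. Any leftover medium item, at most one since medium items are paired, can instead be used to complete a seeded part, which alters the bookkeeping slightly.

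\textbf{Main obstacle.} The hardest step is bounding the excess of those at most two special parts. Total value is at least $n'$, while closed parts account for at most $k+2(\rho-\frac12)$ and the unfinished remainder is below $\rho$. We need this sum to stay below $n'$ when $k\le n'-1$, that is,
$$2\rho-1+\rho < 1, \qquad\text{equivalently}\qquad \rho<\tfrac23.$$
The boundary case $\rho=\frac23$ needs care with strict inequalities. If this fails, we would instead pair the two large items with the smallest available items, or pair the two large seeds with each other when their sum is at most $1$, so that at most one part overshoots. Then the bound becomes $k+(\rho-\frac12)+\rho<k+1$, which holds for $\rho\le\frac23$.

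The count is the delicate part, and it is exactly where the hypothesis of at most two items above $\frac12$ is used.
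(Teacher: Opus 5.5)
Your argument is correct, but it controls the overflow of individual parts differently from the paper. The paper puts both items of value above $\frac{1}{2}$ into a single part. That part has value in $(1,2\rho]$, so it is immediately acceptable. The paper then places the remaining items in \emph{decreasing} order of value into parts still below $\rho$. The ordering is what keeps every other part at most $1$: a part of value in $(\frac{1}{2},\rho)$ already contains at least two items, so its smallest item is below $\frac{1}{3}$, and no later item is larger.

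You instead keep the two large items as separate seeds and replace the sorting by a classification of the remaining items. Medium items in $(1-\rho,\frac{1}{2}]$ are paired, each pair having value in $(2(1-\rho),1]\subseteq[\rho,1]$; your inclusion is written the wrong way round. Small items of value at most $1-\rho$ can never push a part of value below $\rho$ above $1$. Both routes give total overshoot at most $2\rho-1$ and the same final count. The paper's version is a single uniform greedy pass, and merging the seeds also covers $n'=1$ without comment. Yours avoids sorting but needs a more careful specification of the order in which seeds, medium pairs and small items are processed.

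One correction: the case $\rho=\frac{2}{3}$ needs no fallback. The unfinished part has value strictly below $\rho$, so the total is strictly less than $(n'-1)+(2\rho-1)+\rho=n'+3\rho-2\le n'$. This contradicts total value at least $n'$ for every $\rho\le\frac{2}{3}$, and it is exactly the paper's inequality $2\rho-1\le 1-\rho$.

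The fallback you sketch would not work anyway, for three reasons. Two items each above $\frac{1}{2}$ never sum to at most $1$. With separate seeds you cannot in general guarantee a single overshooting part: if no small items are available, both seeds may need a medium item. Merging the seeds does give one overshooting part, but its overshoot can be up to $2\rho-1$ rather than $\rho-\frac{1}{2}$.
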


\begin{proof}
Suppose that there are only two remaining items of value above $\frac{1}{2}$. (The case where there is only one such item is simpler and omitted.) We construct a pure partition into $n'$ parts. Place both  these items in the first part (giving value above $\rho$).
Order the remaining items in decreasing order of value, and place them one by one arbitrarily in other parts that have value less than $\rho$ (as long as there are such parts). As no remaining item has value above $\frac{1}{2}$, no new part can overflow if $\rho \le \frac{2}{3}$. (Any part of value above $\frac{1}{2}$ must contain at least two items. If the part has value below $\frac{2}{3}$, one item  must have value below $\frac{1}{3}$. Having ordered the items, future items have value below $\frac{1}{3}$.) 
Thus, the total overflow is at most $2\rho - 1 \le 1 - \rho$ from the first part, where the inequality holds for $\rho \le \frac{2}{3}$. Thus, no part can have deficit larger than $1 - \rho$, implying that every part has value at least $\rho$.
\end{proof}

\begin{corollary}
\label{cor:fractional}
    For $\rho \le \frac{3}{5}$, if no one-agent, two-agent and lone pure divider step is applicable, then the 2-matching step is applicable (after which the allocation algorithm ends).
\end{corollary}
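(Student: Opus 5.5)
Since the lone pure divider step is not applicable, Lemma~\ref{lem:3divisible} does not apply to any active agent $i$. So every active agent has at least two available items that are both divisible and special for her, that is, of value in $(1-\rho,\rho)$. No item exceeds $\rho$ because one-agent steps are exhausted.

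The goal is to show that these special items can be distributed so that each active agent gets two of them. I will read ``special only for that agent'' as saying that the special items attached to different agents are disjoint. So the task is to build a system of disjoint pairs: a bipartite $b$-matching in which each agent needs degree~2 and each item has degree at most~1.

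The key step uses the exhaustion of two-agent steps. Suppose an item $e$ is divisible and special for two distinct active agents $i$ and $j$, and $i$ has another special item $e_1 \neq e$.
\begin{itemize}
\item If $j$ also has a special item $e_2 \notin \{e, e_1\}$, consider the three items $\{e_1, e_2, e\}$. Agent $i$ gets $e_1$ plus a fraction of $e$, and $j$ gets $e_2$ plus the rest of $e$. Both $i$ and $j$ see $e$ as divisible. Agent $i$ needs fraction $\alpha$ with $v_i(e_1)+\alpha v_i(e)\ge\rho$, where $\alpha v_i(e) \le \rho-(1-\rho)=2\rho-1$. Since $v_i(e)>1-\rho$, we get $\alpha<(2\rho-1)/(1-\rho)$, which equals $1/2$ at $\rho=3/5$. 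The same bound holds for $j$, so the fractions fit and a two-agent step is possible. This is a contradiction.
\item If instead $j$'s only other special item is $e_1$, so that $i$ and $j$ both have exactly $\{e,e_1\}$ as special items, I would use the same three-item argument with any third available item. If no suitable third item exists, the item $e_1$ itself can be shared.
\end{itemize}
So two active agents essentially cannot share a divisible special item. This gives the disjointness needed for the 2-matching.

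I expect the main obstacle to be making this precise. Lemma~\ref{lem:3divisible} only counts divisible special items. An item special for $i$ may be indivisible for $j$, and then the two-agent argument fails. In that case I would instead show that $j$ does not need that item, because $j$'s own divisible special items suffice for her. I would handle this by a Hall-type argument on the bipartite graph between agents and the items that are divisible and special for them.

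The remaining worry is a set $S$ of agents whose divisible special items number fewer than $2|S|$. Then some item $e$ is divisible and special for two agents in $S$, each of whom has a second item available. The three-item construction above then yields a two-agent step, a contradiction. I would verify in this construction that $\rho \le 3/5$ is exactly what makes $\alpha\le 1/2$ for both agents. By Hall's theorem for 2-matchings, each active agent then receives two items each worth more than $1-\rho$, for a total of at least $2(1-\rho)\ge\rho$.
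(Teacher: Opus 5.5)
Your first bullet is correct, and it is a valid variant of the paper's two-agent construction. You give each agent a whole special item and show that each needs less than half of $e$ when $\rho\le\frac35$; the paper instead gives each agent half of $e$ plus a whole item worth more than $\frac12$, which needs only $\rho\le\frac23$.

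The gap is the case in your second bullet, which the Hall paragraph then assumes away with ``each of whom has a second item available''. The two agents sharing $e$ may both have $\{e,e_1\}$ as their only items worth more than $1-\rho$. Then there are no two \emph{distinct} whole items to hand out, and your proposed fixes fail. Take $\rho=\frac35$, let $e,e_1$ be divisible for $i$ and $j$ with $v_i(e)=v_i(e_1)=v_j(e)=v_j(e_1)=0.45$, and let every other available item be worth at most $0.05$ to both agents. Consider a triple $\{e,e_1,e_3\}$.
\begin{itemize}
\item If $e_3$ is a whole item and $e$ is shared, the agent receiving $e_1$ whole needs at least a third of $e$. The other agent then gets at most $0.05+0.3=0.35$. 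Making $e_1$ the shared item is symmetric.
\item If $e_3$ is shared, each agent gets at most $0.45+0.05=0.5$.
\item Triples missing $e$ or $e_1$ are worth at most $0.55$ in total to each agent.
\item Splitting both $e$ and $e_1$ (not a legal two-agent step anyway) also fails: both agents value $\{e,e_1\}$ identically, at $0.9<2\rho$.
\end{itemize}
So no two-agent step exists for the pair $i,j$. If they are the only active agents, no 2-matching exists either. Hence this configuration has to be excluded by some other argument, not by exhaustion of two-agent steps.

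The missing ingredient is Lemma~\ref{lem:4large}, which you never invoke. One-agent steps are exhausted, so every available item is worth less than $\rho$, and the available value is at least $n'$. The lone pure divider step is not applicable, so Lemma~\ref{lem:4large} forces every active agent to have at least three available items of value above $\frac12$. Since $1-\rho<\frac12$, these items are special, though possibly indivisible. In the example above, $i$ has no such item, so Lemma~\ref{lem:4large} gives $i$ a pure partition, and the configuration cannot arise.

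With this lemma, $j$ always has a special item $e_2\notin\{e,e_1\}$. Your first bullet only needs $e_2$ to be special, not divisible, since it is handed over whole. So the first bullet rules out every item that is divisible and special for two agents. After that no Hall-type argument is needed. The agents' sets of divisible special items are pairwise disjoint, each has at least two elements by Lemma~\ref{lem:3divisible}, and the 2-matching is immediate.
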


\begin{proof}
    Under the conditions of the corollary,  lemmas~\ref{lem:3divisible} and~\ref{lem:4large} imply that among the available items, every active agent has at least two divisible special items and at least three items of value above $\frac{1}{2}$. 
    
    If there is an available item $e$ that is divisible and special for two agents, then the two-agent step applies, with $e$ serving as the shareable item. Each of the two agents can get a value of $\frac{1-\rho}{2}$ from $e$, and in addition a value of at least $\frac{1}{2}$ from an item of such value. For $\rho \le \frac{2}{3}$, the total value is at least $\rho$, as desired.
    
    If no available item is divisible and special for two agents, then every agent can get two such items, implementing the 2-matching step.
\end{proof}

This completes the proof of Theorem~\ref{thm:main}.

\section{Discussion}

Our allocation algorithm runs in polynomial time if the MMS values of agents are known, as then scaling the valuations so that the MMS value is~1 can be done in polynomial time. Computing MMS values is NP-hard, even when all items are indivisible. If MMS values are not known, then adaptation of known techniques (such as those in Section 2.4 of~\cite{BUF23}) to our setting give an FPTAS, producing for every $\epsilon > 0$ a $(\frac{3}{5} - \epsilon)$-MMS allocation in time polynomial in $n$, $m$ and $\frac{1}{\epsilon}$.

We showed that our allocation algorithm outputs $\frac{3}{5}$-MMS allocations. It is known that for every $n \ge 2$, there are allocation instances in which no allocation gives every agent more than $\frac{2}{3}$-MMS~\cite{BeiLL25}, and it is open whether $\frac{2}{3}$-MMS can always be achieved, For $n \le 4$, it can~\cite{BeiDLW26}. We remark that also our algorithm outputs $\frac{2}{3}$-MMS allocations when $n \le 4$. (We sketch the proof. The obstacle for our algorithm to achieving  $\frac{2}{3}$-MMS is the two-agent steps, when the shareable item is indivisible. However, for $n = 4$, after executing a two-agent step, only two agents remain, and a single lone pure divider step completes the allocation.)

For $n > 4$ our allocation algorithm (as stated) does not guarantee $\frac{2}{3}$-MMS. 
For every $\epsilon > 0$ and $\rho > \frac{3}{5} + \epsilon$, we present allocation instances with suitably large $n$ in which our algorithm fails to  give some agent an acceptable bundle. 

For a large integer $k$ and $\epsilon = \frac{1}{5k}$, suppose that there are $5k$ items and $3k + 1$ agents. Each agent values each of the items at $\frac{3}{5} + \epsilon$, views $3k$ items as indivisible, and $2k$ items as divisible. For the first $2k$ agents, the first $2k$ items are divisible, but for the remaining $k+1$ agent, the last $2k$ items are divisible. The MMS of every agent is~1: the $3k$ indivisible items are in separate MMS parts, and the divisible items have sufficient value to complete each of these parts to value~1, and to create one additional part of value~1. 

Suppose that $\rho > \frac{3}{5} + \epsilon$. Then the only steps applicable for our algorithm are the two-agent steps. Suppose that the first $2k$ agents perform two-agent steps, and in the process each of them takes an item that is divisible for the remaining agents. Then, we are left with $k+1$ agents and $2k$ indivisible items. Thus, some agent will get at most one item, for a value of $\frac{3}{5} + \epsilon$.

The above is not a conclusive negative example for our allocation algorithm, as it assumes an ``adversarial" execution of our algorithm. It could be that an ``angelic" execution of the algorithm always offer an approximation ratio better than $\frac{3}{5}$, perhaps even $\frac{2}{3}$. In fact, in the above negative example all items have the same value, and it is known that in such instances, $\frac{2}{3}$-MMS allocations exist~\cite{BeiDLW26}.

When setting $\frac{3}{5} < \rho \le \frac{2}{3}$, what may an angelic execution do differently than an adversarial execution? For two-agent steps, if there are several available options for such a step, it may choose the ``best" one (the ratio is at least $\frac{3}{5}$ even if the choice among these options is the ``worse" one). Moreover, it may interleave two-agent steps and lone pure divider steps in the ``best" order. (In our description of the algorithm two-agent steps have priority over lone pure divider steps, but the ratio is at least $\frac{3}{5}$ regardless of how these two types of steps are interleaved.)  Finally, if a 2-matching step becomes applicable, it is executed immediately, even if other types of steps are applicable.

\subsection*{Acknowledgments}

This research was supported in part by the Israel Science Foundation (grant No. 1122/22).

\bibliographystyle{alpha}


\end{document}